\numberwithin{equation}{section}
\newcommand{\R}{{\mathbb R}}
\newcommand{\Z}{{\mathbb Z}}
\renewcommand{\S}{{\mathbb S}}
\newcommand{\F}{{\mathcal{F}}}
\newcommand{\N}{{\mathcal{N}}}
\newcommand{\ed}{\mathrm{d}}
\newcommand{\reff}[1]{(\ref{#1})}
\newcommand{\pch}[1]{{\color{blue}#1}}
\newcommand{\E}{{\mathsf{E}}}
\renewcommand{\k}{\varkappa}
\newcommand\e{\varepsilon}
\newcommand\la{\lambda}
\renewcommand\phi{\varphi}
\newcommand{\be}{\begin{equation}}
\newcommand{\ee}{\end{equation}}
\newcommand{\bel}[1]{\begin{equation}\label{#1}}
\newcommand{\bea}{\begin{eqnarray}}
\newcommand{\eea}{\end{eqnarray}}
\newcommand{\balign}{\begin{align}}
\newcommand{\ealign}{\end{align}}
\newcommand{\ba}{\begin{array}}
\newcommand{\ea}{\end{array}}
\newcommand{\bfig}{\begin{figure}}
\newcommand{\efig}{\end{figure}}
\newcommand{\bra}[1]{\mbox{$\langle \, {#1}\, |$}}
\newcommand{\ket}[1]{\mbox{$| \, {#1}\, \rangle$}}
\newcommand{\dbra}[1]{\mbox{$\langle\langle \, {#1}\, |$}}
\newcommand{\dket}[1]{\mbox{$| \, {#1}\, \rangle\rangle$}}
\newcommand{\inprod}[2]{\mbox{$\langle \, {#1} \, | \, {#2} \, \rangle$}}
\newcommand{\dinprod}[2]{\mbox{$\langle\langle  \, {#1} \, | \, {#2} \, \rangle\rangle$}}
\newcommand{\C}{{\mathbb C}}
\newcommand{\ints}{{\mathbb N}}
\newcommand{\rme}{\mathrm{e}}
\newcommand{\rmi}{\mathrm{i}}
\newcommand{\eref}[1]{(\ref{#1})}
\newtheorem{theorem}{Theorem}[section]
\newtheorem{remark}[theorem]{Remark}
\theoremstyle{definition}
\title{Large Fluctuations of Radiation in Stochastically Activated Two-Level Systems}
\author{ E. Pechersky$^{1}$,\and S. Pirogov$^{1}$,\and G. M. Sch\"utz$^{2,3}$, \and A.~Vladimirov$^{1}$  and  A. Yambartsev$^4$ }
\date{}
\begin{document}

\maketitle

$^1$ Institute for Information Transmission Problems, 19, Bolshoj Karetny, Moscow, 127994, RF

$^2$Institute of Complex Systems II, Forschungszentrum J\"ulich, 52425 J\"ulich, Germany Email: g.schuetz@fz-juelich.de

$^3$Interdisziplin\"ares Zentrum f\"ur Komplexe Systeme, Universit\"at Bonn, Br\"uhler Str. 7, 53119 Bonn, Germany

$^4$IME, University of Sao Paulo (USP), Sao Paulo 05508-090,
SP, Brazil

\begin{abstract} We study large fluctuations of emitted radiation in the system of $N$ non-interacting two-level atoms. Two methods are used to calculate the probability of large fluctuations and the time dependence of excitation and emission. The first method is based on the large deviation principle for Markov processes. The second one uses an analogue of the quantum formalism for classical probability problems.

Particularly we prove that in a large fluctuation limit approximately half of the atoms are excited. This fact is independent on the fraction of the excited atoms in equilibrium.

\end{abstract}

{\bf AMS 2010 Classification:} Primary 60J, 60F10, Secondary 60K35

{\bf Key words and phrases:} continuous-time Markov processes, large deviations, reversibility, stationary probabilities, product-formula

\section{Motivations}\label{Intro}

In recent years there has been considerable interest in studying optimal realizations of large deviations in stochastic interacting
particle systems, see e.g. \cite{Beli13a,Bodi05,Bodi06} for optimal profiles in current large deviations of the
asymmetric simple exclusion process
or \cite{Harr06,Harr13} for condensation in the zero-range process. In addition to the mathematical
interest in understanding how large deviations are typically realized (when they occur), motivation comes also from the fact that
such large deviations are studied experimentally in order to test the validity of fluctuation theorems such as the Gallavotti-Cohen
theorem \cite{Gall95b}, the Lebowitz-Spohn results \cite{Lebo99} or the Jarzynski relation \cite{Jarz97}. Generally, fluctuation theorems predict the ratio of probabilities
of a positive large deviation of some
observable to a negative large deviation of the same absolute magnitude \cite{Chet11,Lebo99,Schu14}.
Particularly simple and therefore transparent systems where such relations can be probed experimentally
are two-level systems where each microscopic
entity can be either in a ground state or in an excited state. As an example we mention the optical excitation of an ensemble of
single-defect centres in diamond.  Measurements of the dissipated work \cite{Schuler05} and entropy \cite{Tietz06b} agree
with the expected integral fluctuation relations. Fluctuation relations for driven two-level systems have been investigated recently also
in the context of biological motors \cite{Chvo10} and quantum two-level systems \cite{Liu13}.

Here we  address the question of optimal realizations
of a large deviation in an ensemble of two-level systems. We consider a scenario where every two-level system in the ensemble of (independent)
two-level systems can be excited by an external source of energy and later relaxes to its ground state by the emission of radiation.
Conditioning on a large deviation of the emitted radiation in a time interval $[0,T]$ we ask how the number of excited states $\Xi_N(t)$
and also how the accumulative radiation $\Upsilon_N(t)$ will typically behave during this time interval assuming that the number $N$ of elements of the two-level system is large.
In particular, we wish to answer the question whether such
a large deviation is typically realized by the typical behaviour for most of the time and only some strong bursts of radiation, or by
non-typical behaviour throughout $[0,T]$, or by both, i.e., by non-typical behaviour together with  bursts.

For definiteness we refer to microscopic two-level systems as atoms.
To keep the model simple we study the case where both excitation and emission occur after exponentially distributed random times
with parameters $\lambda$ and $\mu$ respectively. It will transpire that the behaviour of the ensemble of two-level systems is non-trivial: the optimal realization of a
large deviation involves
non-typical radiation intensity throughout $[0,T]$. If $T$ is large then at intermediate times the radiation activity is approximately constant (and different
from its typical value), and one has moderate bursts or suppression at the beginning or the end (or both) of the
time-interval $[0,T]$, depending on the initial state and on the magnitude of the large deviation that is under consideration.
The ``moderate bursts'' are defined as an expected increase in the radiation intensity $\dot{\Upsilon}_N(t)$ during a short (exponentially bounded)
interval of time.
The probabilistic interpretation of this behaviour is that in order to generate a large deviation from typical
behaviour in a long time interval $T$ it is less costly to have a relatively small, but lasting, deviation from non-typical behaviour
than to have almost permanent typical behaviour, punctuated by occasional strong deviations (strong bursts of radiation). The (moderate)
bursts at initial and/or final times realize the transition from the initial state (and/or to the final state) to the intermediate mildly nontypical
behaviour.

We address this problem both from a macroscopic Hamiltonian perspective  and from a microscopic approach employing
generating function techniques. The latter indicates that qualitatively similar behaviour is expected for any number of excited
states per atom.

\section{Macroscopic approach. Large Deviations.}

In this section fluctuations of an ensemble of two-level atoms are described by means of the large deviation principle.
The goal is to study a rare event where a very large emission is happening during time interval $[0,T]$. More exactly, we find a mean dynamics of the number of emissions and the number of excited atoms during $[0,T]$ conditioned on the rare event mentioned above.

Consider an ensemble (a sample) of $N$ atoms where each atom can be either in a
ground state or in an excited state. The sample is under a radiation by
steady flux of photons. Each atom in the ground state can absorb a
photon transiting to the exited state. An atom in the excited state
can emit a photon and consequently undergoes a transition to the ground
state. There are no interactions between the atoms. Therefore all atom
transformations are independent of each other. Given the
parameters of mean times that an atom spends in the ground and the excited
states it is easy to find the mean dynamics of the number of excited atoms
and of the cumulative number of the photon emissions on $[0,T]$.

We show that the process path given the large emission on $[0,T]$ is close to a linear function. Moreover, under the large emission the conditional input and output flows are close to each other in the interior of $[0,T]$. This fact holds for any stationary regime of both flows.

We describe this dynamics by a two-dimensional Markov process.
For any $t\in[0,T]$, the components of the process describe the number
of the excited atoms at moment $t$ and the cumulative number of emission
events during $[0,t]$. In terms of the Markov process, our goal is to
describe the shape of paths of the Markov process given the large cumulative
number of emission events on $[0,T]$.

A tool we used for the analysis is the large deviation theory, which studies the probabilities of rare events, and also describes how these rare events evolve. The main tool of the large deviation theory is a theorem called the large deviation principle. There exists many versions of the large deviation principle for different stochastic models.
Here we apply the large deviation principle to Markov processes following \cite{FeK}.

\subsection{The model}\label{def} We formalize our model as follows. Consider $N$ independent random variables (atoms in what follows) that take two values $0$ and $1$. We say that the atom is in the ground state if the corresponding random variable is $0$, otherwise the atom is excited.  We study time dynamics of every atom. As time passes, any atom randomly changes its state. The stochastic dynamics of a single atom is a Markov process on the state space $\{0,1\}$. The transition of the atom from the  excited state to the ground state $1\to 0$ is called an \textit{emission}. The opposite transition is called an \textit{excitation}.

A stochastic dynamics of the ensemble $\Sigma= (\sigma_1,...,\sigma_N)$ of $N$ two-level atoms where $\sigma_i\in\{0,1\}$ is described by a two dimensional Markov process $\Gamma_N(t)=\big(\Xi_N(t),\Upsilon_N(t)\big)$, where  $\Xi_N(t)$ is the number of excited atoms at time $t$ and $\Upsilon_N(t)$ is the cumulative number of emitted photons on the interval $[0,t]$. The components $\Xi_N(t)$ and $\Upsilon_N(t)$ are strongly correlated.

\subsubsection{Excitation component}\label{exi}
The process $\Xi_N(t)$ takes its values in the set $\mathcal{N}=\{0,1,2,...,N\}$. A random event $(\Xi_N(t)=n),\ n\in\mathcal{N}$ means that the number of excited atoms is equal to $n$. Then $N-n$ is the number of the atoms in their ground states.  The process $\Xi_N(t)$ is a jump Markov process. Every jump is either $+1$ or $-1$, either enlarging or reducing the number of the excited atoms by one.

An operator semigroup $(\mathbf{P}_t,\ t\in\R_+)$ drives the dynamics of
$\Xi_N(t)$ with the initial value  $\Xi_N(0)=x\in\N$. Namely,
\begin{equation}
\mathbf P_tG(x)=\E _xG(x_t),
\end{equation}
where $x_t$ is the trajectory of Markov process starting from $x_0=x$, and where $G\in \F$, i.e.
the operators $\mathbf{P}_t$ act on the space $\F$ of functions
$\N\to\R$. The semigroup can be represented by an infinitesimal operator
$\mathbf L$ as
\begin{equation}\label{1o0}
\mathbf{P}_tG=e^{t\mathbf{L}}G
\end{equation}
for $G\in\F$. The infinitesimal operator $\mathbf{L}$ of $\Xi_N(t)$ is
determined by two parameters $\lambda>0$  and $\mu>0$ as follows:
\begin{equation}
\mathbf{L}G(n)=
\lambda(N- n)\left[G(n+1)- G(n) \right]+\mu n\left[ G(n-1)- G(n)\right].
\end{equation}
The constants $\la$ and $\mu$ are intensities of the jumps $+1$ and $-1$
respectively.

The process $\Xi_N(t)$ is ergodic and reversible with the stationary
distribution
\begin{equation}\label{2o1}
\pi(n):=\Pr(\Xi_N(t)=n)=\frac{\binom{N}{n}\gamma^n}{(1+\gamma)^N},
\end{equation}
where $\gamma=\lambda/\mu$. The distribution  $\pi$ can be found from
the detailed balance equations
\begin{equation}\label{2o2}
\lambda (N-n)\pi(n)=\mu(n+1)\pi(n+1),  \mbox{ for } n<N.
\end{equation}

\subsubsection{Emission component}
The second component $\Upsilon_N(t)$ describes the number of emission
events and is equal to the number of negative jumps of the process $\Xi_N$
on the time interval $[0,t]$. Namely, let $k(t)=k_+(t)+k_-(t)$ be the
total number of jumps of $\Xi_N$ up to the moment $t\in[0,T]$, where
$k_+(t)$ and $k_-(t)$ are the numbers of positive and negative jumps,
respectively. Then we define
$$
\Upsilon_N(t)=-k_-(t).
$$
It is assumed that $k_+(0)=k_-(0)=0$. Note that
$\Xi_N(t)=\Xi_N(0)+k_+(t)-k_-(t)$.

\subsubsection{The process}
The two-dimensional process
\begin{equation}\label{3o0}
\Gamma_N(t)=(\Xi_N(t),\Upsilon_N(t))
\end{equation}
is a Markov process with dependent components taking their values in
$\N\times \Z_-$, where $\Z_-=\{...-2,-1,0\}$. The process $\Gamma_N(t)$ is a
non-equilibrium jump process. Observe that the process has only two types of
jumps $\{(1,0), (-1,-1)\}$.  The infinitesimal operator of $\Gamma_N(t)$ is
\begin{equation}\label{3o1}
\mathbf{L}_{\Gamma}G(n,m) =\lambda(N-n)[G(n+1,m)-G(n,m)] +\mu n[G(n-1,m-1)-G(n,m)],
\end{equation}
where $ n\in\N, m\in\Z_- $, and $\lambda, \mu$ are positive parameters
introduced in Section~\ref{exi}, $G(n,m)$ belongs to the function
space $\F_\Gamma$ of functions $\N\times\Z_-\to \R$.

\subsection{Large deviations}
The goal of our studies is to find the probability of a very large
emission and a large deviation path of the process which produces very
large emission during the interval $[0,T]$. It is assumed that the number
$N$ of the atoms is large and grows to infinity. The theory of large deviations 
allows one to solve both mentioned tasks: to find the asymptotics
of the large emission probability as $N\to\infty$ as well as the path
of the dynamics that realizes the large deviation on the interval
$[0,T]$. Solving these problems by the method of large deviations
we follow constructions and results in \cite{FeK} (see also Appendix in
\cite{Enter}).

In order to study the asymptotics in $N$ it is convenient to consider
the scaled process \bel{scaleproc}
\gamma_{N}(t)=\left(\xi_{N}(t)=\frac{1}{N}\Xi_N(t),\zeta_{N}(t)=\frac{1}{N}\Upsilon_N(t)\right)
\ee instead of \reff{3o0}. The scaled process $\gamma_N=(\xi_N,\zeta_N)$
takes  values in $\mathbf D_{N}=\left( \frac{1}{N}\N
\times\frac{1}{N}\Z_-\right) $. The process $\gamma_N$ is a jump process with
two types of jumps: $\left(\frac1N,0\right)$ and
$\left(-\frac1N,-\frac1N\right)$ with intensities $\lambda$ and $\mu$
respectively. Let $(x,y)\in\mathbf D_{N}$ and $\alpha=\frac 1N$. Then the
infinitesimal operator of $\gamma_N$ is
\begin{eqnarray*}
\mathbf{L}_{\gamma_N}G\left(x_N,y_N\right) &=& N\lambda(1-x_N)[G(x_N+\alpha,y_N)-G(x_N,y_N)] \\
&+& N\mu x_N[G(x_N-\alpha,y_N-\alpha)-G(x_N,y_N)].
\end{eqnarray*}

Since $\mathbf D_{N}\subset\mathbf D=[0,1] \times\R_-$ for every $ N $, we
will assume that the processes $ \gamma_{N} $ take their values in $
\mathbf D $. Let  $\mathcal F_2$ be the space of smooth functions
$\mathbf D\to \R$. Assume that the sequence $(x_N, y_N) \in {\mathbf
D}_N$ converges to some $(x,y)\in \mathbf D$, as $N\to\infty$. Then
\begin{equation}\label{LimOp}
\lim_{N\to\infty} \mathbf{L}_{\gamma_N}G(x_N,y_N)=\la(1-x)\frac{\partial }{\partial x}G(x,y)-\mu x\left(\frac{\partial}{\partial x}G(x,y)+ \frac{\partial}{\partial y}G(x,y)\right).
\end{equation}

The Lagrangian plays a key role in the large deviations of the Markov processes $\gamma_N$
\begin{equation}\label{2o11}
\mathcal L(f_1,f_2)(t)=\sup_{\varkappa_1(t),\varkappa_2(t)}\left\{\dot f_1(t)\varkappa_1(t)+\dot f_2(t)\varkappa_2(t)-
H(f_1(t),f_2(t),\varkappa_1(t),\varkappa_2(t))
\right\}, 
\end{equation}
which is the Legendre transform of the corresponding Hamiltonian
\begin{equation}\label{4o1}
H(f_1,f_2,\varkappa_1,\varkappa_2)=\lambda(1-f_1)[e^{\varkappa_1}-1]+\mu f_1[e^{-\varkappa_1-\varkappa_2}-1].
\end{equation}
Thus
\begin{equation}
 \begin{array}{rcl}
\dot f_1&=&\lambda (1-f_1)\exp\{\varkappa_1\}-\mu f_1\exp\{-(\varkappa_1+\varkappa_2)\}, \\
\dot f_2&=&-\mu f_1\exp\{-(\varkappa_1+\varkappa_2)\}.
\end{array}
\end{equation}
Here $f_1$ belongs to the set $C_1[0,T]$ of differentiable functions on $[0,T]$ with
values from interval $[0,1]$, and $f_2\in C_-[0,T]$ the set of differentiable 
functions on $[0,T]$ taking values in $\mathbb R_-$. 
The functions $\varkappa_1,\varkappa_2 \in C[0,T]$, where $C[0,T]$ is
the space of differentiable functions on $[0,T]$ taking values in
$\mathbb R$. The functions $\varkappa_1,\varkappa_2$ are called the {\it 
momenta}  of the dynamics, and the pair $(f_1,f_2)$ describes the
dynamics of the process $\gamma_N(t)$ at large $N$: the probability that
a trajectory of the process $\gamma_N$ is close to the path
$(f_1(t),f_2(t))$ is asymptotically represented by Lagrangian \reff{2o11}
$$
\mathbb{P} \bigl( (\gamma_N(t))_{t\in [0,T]} \approx (f_1(t),f_2(t))_{t\in [0,T]}\bigr) \approx \exp \Bigl( -N \int_0^T \mathcal{L}(f_1,f_2)(t)dt \Bigr).
$$
To derive \reff{4o1}, we follow the scheme in \cite{FeK}. First,
we have to construct a non-linear Hamiltonian
\begin{eqnarray*}
({ \cal H}_N G)(x_N,y_N)&:=&\frac1N \exp\{-NG(x_N,y_N)\}\times {\mathbf L}_{\gamma_N}\exp\{NG\}(x_N,y_N)\\
&=&\la(1-x_N)\left[\exp\left\{N\left(G\left(x_N+\tfrac{ 1}N,y_N\right)-G\left(x_N,y_N\right)\right)\right\}-1\right]\\
&&{} \ + \mu x_N\left[\exp\left\{N\left(G\left(x_N-\tfrac1N,y_N\right)-G\left(x_N,y_N\right)\right)\right\}-1\right].
\end{eqnarray*}
And, second, to find its limit as $N\to\infty$:
\begin{eqnarray}\label{2.13}
&& \lim_N({ \cal H}_N G)(x_N,y_N) \nonumber \\
&& {} =\la(1-x)\left[\exp\{\tfrac{\partial}{\partial x}G(x,y)\}-1\right]+\mu x\left[\exp\{-\tfrac{\partial}{\partial x}G(x,y)-\tfrac{\partial}{\partial y}G(x,y)\}-1\right].
\end{eqnarray}
The right-hand side of \reff{2.13} is the Hamiltonian \reff{4o1}
after the following changes of variables:
$$
f_1=x, f_2=y, \k_1=\tfrac{\partial}{\partial x}G(x,y), \k_2=\tfrac{\partial}{\partial y}G(x,y).
$$
We consider the problem of finding the optimal paths of large
radiations. Namely, we consider the following boundary conditions:
\begin{equation}\label{4o2}
\begin{array}{l}
f_1(0)=d \in[0,1],\\
f_2(0)=0,\ f_2(T)=-B, \mbox{ where }B>0,
\end{array}
\end{equation}
for the Hamiltonian system corresponding to $H$ in \reff{4o1}
\begin{equation}\label{ham1}
\left\{ \begin{array}{rcl}
\dot f_1&=&\lambda (1-f_1)\exp\{\varkappa_1\}-\mu f_1\exp\{-(\varkappa_1+\varkappa_2)\}, \\
\dot f_2&=&-\mu f_1\exp\{-(\varkappa_1+\varkappa_2)\}, \\
\dot \varkappa_1&=&\lambda \exp\{\varkappa_1\}-\mu\exp\{-(\varkappa_1+\varkappa_2)\} - \lambda + \mu, \\
\dot\varkappa_2&=&0.
\end{array} \right.
\end{equation}
We study the system for large $B$.

The solution of \reff{ham1} with the boundary conditions
\reff{4o2} presents the mean dynamics $f_1$ and $f_2$ of the occupation and
radiation processes. We add the boundary condition $\varkappa_1(T)=0$ because
the value of $f_1(T)$ is free.

\begin{remark} (On the large deviations principle.)
The book \cite{FeK} proposes a program how to prove the large deviation
principle for stochastic processes (in particular, Markov
processes) in consideration. This program requires special
considerations, depending on the studied process. For example, see
\cite{Kraa}, where the large deviation principle is studied for  a
model that is close to our model in some respect.

We, however, consider the Markov process  $\Gamma_N(t)$ which is a
scaled sum of $N$ independent Markov processes. Each one of these
Markov processes is concentrated on piece-wise constant paths on $[0,T]$.  We
can consider the Skorohod space of piece-wise continuous paths on
$[0,T]$ (see \cite{Bil}, Section 12). For our $N$ independent paths we can use the Sanov
theorem with subsequent application of the contraction principle (see
\cite{reza}). This combination of the Sanov theorem and the contraction
principle gives the rate function (see \reff{rate}) for our process $\Gamma$.
\end{remark}

\subsection{Results}

Assume  that the total radiation during the time interval $[0,T] $ is
$\lfloor BN \rfloor$, where $\lfloor \cdot\rfloor$ is integer part, i.e.
$\Upsilon_N(T)=-\lfloor BN\rfloor,\ B>0 $, with initial value
$\Upsilon_N(0)=0$. The initial number of excited atoms is $\Xi_N(0)=\lfloor
dN\rfloor \in[0,N]$. For example, we can choose $d=\frac{\la}{\la+\mu}$. Then
the  large deviation principle  yields
\begin{eqnarray}\label{ldp1}
&&\lim_{N\to\infty}\frac1N\ln\Pr\left(\xi(0)=\lfloor Nd\rfloor ,\ \zeta(0)=0,\zeta(T)=-\lfloor NB\rfloor\right)\\
&&\ \ \ \ \ \ \ \ \ \ \ \ = -\inf_{f_1,f_2}\left\{I(f_1,f_2):\:f_1(0)=d,\  f_2(0)=0,f_2(T)=-B\right\},\nonumber
\end{eqnarray}
where the rate function is, see \reff{2o11}, \reff{4o1},
\begin{eqnarray*}
I(f_1,f_2) &=& \int_0^T \mathcal{L} (f_1, f_2) \ed t \\
 &=& \int_0^T \sup_{\varkappa_1,\varkappa_2}\left(\varkappa_1\dot f_1+\varkappa_2\dot f_2-\lambda(1-f_1)[e^{\varkappa_1}-1]-\mu f_1[e^{-\varkappa_1-\varkappa_2}-1] \right)\ed t.
\end{eqnarray*}
\begin{remark}
The rate function is convex therefore the infimum in \reff{ldp1} is achieved and unique. Indeed, as the functional $I(f_1,f_2)$  is the convex combination of suprema of affine functionals it is convex. So its critical point is its minimum. This minimum is unique because the corresponding boundary value problem for Euler system has a unique solution as follows from calculations. 
\end{remark}

We rewrite the rate function in an alternative way.
\begin{equation}\label{rate}
I(f_1,f_2)=\int_0^T \sup_{\varkappa_1(t),\varkappa_2(t)}\left(\varkappa_1(t)\dot f_1(t)+\varkappa_2(t)\dot f_2(t)-\rho(f_1)[\varphi(\varkappa_1(t),\varkappa_2(t))-1]\right)\ed t,
\end{equation}
where $\rho(f_1)=\lambda(1- f_1)+\mu f_1$, and
\begin{equation}\label{laplace}
\varphi(\varkappa_1,\varkappa_2)=\frac{\lambda (1-f_1)}{\rho(f_1)}\exp\{\varkappa_1\}+\frac{\mu f_1}{\rho(f_1)}\exp\{-(\varkappa_1+\varkappa_2)\}.
\end{equation}
The supremum over $\varkappa_1(t),\varkappa_2(t)$ in \reff{rate} can be found
from the equations
\begin{eqnarray}
\dot f_1&=&\lambda (1-f_1)\exp\{\varkappa_1\}-\mu f_1\exp\{-(\varkappa_1+\varkappa_2)\},\nonumber\\
\dot f_2&=&-\mu f_1\exp\{-(\varkappa_1+\varkappa_2)\}.
\end{eqnarray}
The infimum over $f_1,f_2$ can be found from Euler equations
\begin{eqnarray}\label{euler}
\dot \varkappa_1&=&\lambda \exp\{\varkappa_1\}-\mu\exp\{-(\varkappa_1+\varkappa_2)\}-\lambda+\mu,\\
\dot\varkappa_2&=&0.\nonumber
\end{eqnarray}
In other words, we have to solve the Hamiltonian system of equations \reff{ham1} 
\begin{equation*}
\left\{ \begin{array}{rcl}
\dot f_1&=&\lambda (1-f_1)\exp\{\varkappa_1\}-\mu f_1\exp\{-(\varkappa_1+\varkappa_2)\}, \\
\dot f_2&=&-\mu f_1\exp\{-(\varkappa_1+\varkappa_2)\}, \\
\dot \varkappa_1&=&\lambda \exp\{\varkappa_1\}-\mu\exp\{-(\varkappa_1+\varkappa_2)\} - \lambda + \mu, \\
\dot\varkappa_2&=&0.
\end{array} \right.
\end{equation*}
under boundary conditions $f_1(0)=d$, $f_2(0)=0$, $f_2(T)=-B$ and
$\varkappa_1(T)=0$ with the Hamiltonian \reff{4o1}. We have derived the
system of equations \reff{ham1} with boundary conditions \reff{4o2} as a solution
of optimization problem. The solution of the boundary problem is unique and
gives an extremal path for the radiation problem.

\subsubsection{Solutions of \reff{ham1}} Two last equations of \reff{ham1} do not depend on $f_i, \dot f_i,\  i=1,2$. These equations can be solved as follows. The function $\varkappa_2$ does not depend on $t$. Having $\varkappa_2$ fixed we look for $\varkappa_1$ by means of the change of variable $y(t)=e^{\varkappa_1(t) }$. Then the third equation of \reff{ham1} is
\begin{equation}\label{yy}
\dot y=\lambda y^2-(\lambda-\mu)y-\mu e^{-\varkappa_2}.
\end{equation}
Using notations
\begin{equation}
\label{notation}
\begin{array}{l}
\gamma=\frac{\mu}{\lambda},\\[0.3cm]
a=1-\gamma,\\[0.3cm]
b=\gamma e^{-\varkappa_2}
\end{array}
\end{equation}
we reduce \reff{yy} to the form
\begin{equation}\label{yy1}
\dot y=\lambda\big(y^2-ay-b\big).
\end{equation}
The value of $a$ is upper bounded, $a\leq 1$, since $\gamma\geq 0$,
however $a$ may be negative and may have a large absolute value.
We represent the expression in the parentheses of \eqref{yy1} as
\begin{equation}\label{yy2}
y^2-ay-b=(y-r_1)(y-r_2),
\end{equation}
where
\begin{equation}\label{r12}
r_{1,2} = \frac{a}{2} \mp \sqrt{ \left( \frac{a}{2} \right)^2 + b }.
\end{equation}
Since $b>0$, the roots $r_1$ and $r_2$ have different signs:
$r_1<0<r_2$. The equation  \reff{yy1} can be represented as
\begin{equation}\label{yy22}
\frac{\dot y}{y-r_2}-\frac{\dot y}{y-r_1}=\lambda(r_2-r_1).
\end{equation}
The solution of \reff{yy2} is
\begin{equation}\label{sol2}
y(t)=e^{\varkappa_1 (t) } = \frac{r_2-r_1C_1\exp\{t\lambda(r_2-r_1)\}}{1-C_1\exp\{t\lambda(r_2-r_1)\}},
\end{equation}
where $ C_1 $ is a constant within the interval
$\big(\frac{r_2}{r_1},1\big)$. The value of $C_1$ cannot be out of this
interval because $0< y(t)< \infty$. Remark that $\frac{r_2}{r_1}<0$.

Note that $\k_1(T)=0$ since there are no constraints on the value
of $f_1(T)$. Then
\begin{equation}\label{C1}
C_1=\frac{r_2-1}{r_1-1}\exp\{-T\lambda(r_2-r_1)\}
\end{equation}
and
\begin{equation}\label{sol3}
e^{\varkappa_1(t)}=\frac{r_2+r_1\frac{r_2-1}{1-r_1}\exp\{(t-T)\lambda(r_2-r_1)\}}{1+\frac{r_2-1}{1-r_1}\exp\{(t-T)\lambda(r_2-r_1)\}}.
\end{equation}

The energy conservation law implies (see \reff{4o1}) that \textit{there
exists a constant $K$ such that, for any $t\in T$,}
$$
H(f_1,f_2,\varkappa_1,\varkappa_2)=K.
$$
Thus the path $f_1$ corresponding to the density of excited atoms is
\begin{equation}\label{f1}
f_1(t)=\frac{\lambda(e^{\varkappa_1}-1)-K}{\lambda(e^{\varkappa_1}-1)+\mu(1-e^{-\varkappa_1-\varkappa_2})}.
\end{equation}
The constant $K$ is sought from the initial condition $f_1(0)=d$. That
is\pch{,}
\begin{equation}\label{K}
K=\lambda(e^{\varkappa_1(0)}-1)(1-f_1(0))-f_1(0)\mu(1-e^{-\varkappa_1(0)-\varkappa_2}).
\end{equation}
Next we have to evaluate the emission path (see \reff{ham1})
\begin{equation}
f_2(t)=-\mu e^{-\varkappa_2}\int_0^tf_1(s)e^{-\varkappa_1(s)}\ed s,\ \ t\in[0,T].
\end{equation}
To this end we evaluate the integral
\begin{equation}\label{indef}
 J(t)= \int_0^t \frac{ \lambda(e^{\k_1(s)} -1) + K }{ \lambda(e^{\k_1(s)} -1) + \mu(1-e^{-\k_1(s)} e^{-\k_2})  } e^{-\k_1(s)} \ed s.
\end{equation}
The denominator of this integral multiplied by  $e^{\k_1(s)}$ can be
written as
$$
\lambda e^{2\k_1(s)} - (\lambda - \mu) e^{\k_1(s)} -\mu e^{-\k_2}= \lambda (e^{\k_1(s)} - r_1)(e^{\k_1(s)} - r_2),
$$
see \reff{yy1},\reff{yy2} and \reff{r12}. Hence
\begin{eqnarray*}
J(t)&=&  \int_0^t \frac{ \ed s }{ e^{\k_1(s)} -r_2} + \int_0^t \frac{ \lambda(r_1 -1) + K }{ \lambda (e^{\k_1(s)} - r_1)(e^{\k_1(s)} - r_2) } \ed s  \\
&=&\int_0^t \frac{\ed s }{ e^{\k_1(s)} -r_2}  +
\frac{ \lambda(r_1 -1) + K }{ \lambda (r_2- r_1) } \Bigl(  \int_0^t \frac{
\ed s }{ e^{\k_1(s)} -r_2} - \int_0^t \frac{\ed s }{ e^{\k_1(s)} -r_1} \Bigr) \\
&=&  \Bigl( 1+  \frac{ \lambda(r_1 -1) + K }{ \lambda (r_2- r_1) } \Bigr) \int_0^t \frac{ \ed s }{ e^{\k_1(s)} -r_2}  - \frac{ \lambda(r_1 -1) + K }{ \lambda (r_2- r_1) }  \int_0^t \frac{ \ed s }{ e^{\k_1(s)} -r_1}.
\end{eqnarray*}
\vspace{.3cm}

\noindent Since
\begin{equation*}
e^{\k_1(s)} -r_1 =\frac{ r_2 + r_1 \frac{r_2-1}{1-r_1} e^{\lambda(r_2-r_1)(s-T)} }{ 1+ \frac{r_2-1}{1-r_1} e^{\lambda(r_2-r_1)(s-T)}}
- r_1 = \frac{ r_2 - r_1}{ 1+ \frac{r_2-1}{1-r_1} e^{\lambda(r_2-r_1)(s-T)}}
\end{equation*}
then
\begin{eqnarray*}
J_1(t)&=&\int_0^t  \frac{\ed s }{ e^{\k_1(s)} -r_1} = \int_0^t \frac{ 1+ \frac{r_2-1}{1-r_1} e^{\lambda(r_2-r_1)(s-T)}}{ r_2 - r_1} \ed s \\
&=&  \frac{t}{r_2-r_1} + \frac{r_2-1}{1-r_1} \cdot \frac{e^{-\lambda(r_2-r_1)T}}{\lambda(r_2-r_1)^2} \Bigl( e^{\lambda(r_2-r_1)t} -1 \Bigr).
\end{eqnarray*}
\vspace{.3cm}

\noindent Since
\begin{equation*}
e^{\k_1(s)} -r_2 =\frac{ r_2 + r_1 \frac{r_2-1}{1-r_1} e^{\lambda(r_2-r_1)(s-T)} }{ 1+ \frac{r_2-1}{1-r_1} e^{\lambda(r_2-r_1)(s-T)}}
- r_2 = \frac{ (r_1 - r_2)\frac{r_2-1}{1-r_1}e^{\lambda(r_2-r_1)(s-T)}   }{ 1+ \frac{r_2-1}{1-r_1} e^{\lambda(r_2-r_1)(s-T)}}
\end{equation*}
then
\begin{eqnarray*}
J_2(t)&=&\int_0^t  \frac{ \ed s }{ e^{\k_1(s)} -r_2} = \int_0^t \frac{ 1+ \frac{r_2-1}{1-r_1} e^{\lambda(r_2-r_1)(s-T)}}{ (r_1 - r_2)\frac{r_2-1}{1-r_1}e^{\lambda(r_2-r_1)(s-T)} } \ed s \\
&=& - \frac{t}{r_2-r_1} -\frac{1-r_1}{r_2-1} \cdot \frac{e^{\lambda(r_2-r_1)T}}{\lambda(r_2-r_1)^2} \Bigl( 1 - e^{-\lambda(r_2-r_1)t}  \Bigr).
\end{eqnarray*}
Thus
\begin{equation}\label{I}
J(t)= \Bigl( 1+  \frac{ \lambda(r_1 -1) + K }{ \lambda (r_2- r_1) } \Bigr)J_1(t)- \frac{ \lambda(r_1 -1) + K }{ \lambda (r_2- r_1) }J_2(t)
\end{equation}
and
\begin{equation}\label{f2}
f_2(t)=-\mu\exp\{-\varkappa_2\} I(t).
\end{equation}
Therefore $\k_2$ is a solution of
\begin{equation}\label{kap22}
-B=f_2(T)=-\mu\exp\{-\varkappa_2\}\left[\Bigl( 1+  \frac{ \lambda(r_1 -1) + K }{ \lambda (r_2- r_1) } \Bigr)J_1(T)- \frac{ \lambda(r_1 -1) + K }{ \lambda (r_2- r_1) }J_2(T)\right].
\end{equation}
Remark that $e^{-\k_2}$ is present in the definition of constants
$r_1,r_2$ and $K$. Therefore \reff{kap22} is not a linear equation with
respect to $e^{-\k_2}$.

\section{Emergence of chaos in large fluctuations}

\subsection{Equations}
Here we derive a surprising feature of the asymptotic ``quasi-equilibrium"
behaviour of the emission and excitation rates within the interval $[0,T]$. We
will prove that these rates become equal in the limit and do not depend on
$\lambda$ and $\mu$.

For the hamiltonian system \reff{ham1} with boundary conditions \reff{4o2}, let
us make a change of variables $y(t)=e^{\varkappa_{1}(t)}$ and
$z(t)=e^{-\varkappa_{1}(t)-\varkappa_{2}(t)}$. Recall that
$\varkappa_{2}(t)=\varkappa_{2}$ is a constant. From (\ref{sol2}), we get
\begin{equation}\label{ey}
y(t)=\frac{r_{2}(1-r_{1})+r_{1}(r_{2}-1)\phi(t)}{1-r_{1}+(r_{2}-1)\phi(t)},
\end{equation}
where
\begin{equation}\label{ephi}
\phi(t)=e^{(t-T)\lambda(r_{2}-r_{1})}.
\end{equation}
We also have
\[
r_{2}-r_{1}=\frac{\sqrt{(\lambda-\mu)^{2}+4\mu\lambda e^{-\varkappa_{2}}}}{\lambda},
\]
(hence $0<\phi(t)\le 1$ for $0\le t\le T$) and
\begin{equation}\label{er2}
r_{2}=\frac{\lambda-\mu+\sqrt{(\lambda-\mu)^{2}+4\mu\lambda e^{-\varkappa_{2}}}}{2\lambda}.
\end{equation}
Note that the values of $r_{1},r_{2}$ and $\varkappa_{2}$ depend on the
parameter $B>0$ and the values of $\lambda>0$, $\mu>0$, and $T>0$ are fixed.

\subsection{Some bounds}

First, note that the right-hand side of (\ref{ey}) is monotone decreasing as
$\phi$ changes from $0$ to $1$ if $r_{2}>1$ and monotone increasing in $\phi$
at the same interval if $r_{2}<1$. Indeed, we have
\begin{eqnarray*}
y(t)&=&\frac{r_{1}(1-r_{1})+(r_{2}-r_{1})(1-r_{1})+r_{1}(r_{2}-1)\phi(t)}{1-r_{1}+(r_{2}-1)\phi(t)} \\
&=& r_{1}+\frac{(r_{2}-r_{1})(1-r_{1})}{1-r_{1}+(r_{2}-1)\phi(t)},
\end{eqnarray*}
where $(r_{2}-r_{1})(1-r_{1})>0$ and $1-r_{1}>0$. Hence,
\[
y(t)=r_{1}+\frac{c_{1}}{c_{2}+c_{3}\phi(t)},
\]
where $c_{1},c_{2}>0$ and where $c_{3}>0$ if $r_{2}>1$ and $c_{3}<0$ if
$r_{2}<1$. Moreover, the denominator $1-r_{1}+(r_{2}-1)\phi(t)$ is positive
for all $\phi(t)\in[0,1]$. Then the monotonicity follows.

We have $y(t)=r_{2}$ if $\phi(t)=0$ and $y(t)=1$ if $\phi(t)=1$. Since $0\le
\phi(t)\le 1$, we conclude that
\begin{equation}\label{ele}
y(t)\le\max\{1,r_{2}\}\le r_{2}+1,\quad 0\le t\le T.
\end{equation}

Integrating the second equation in (\ref{ham1}), we get
\begin{equation}\label{ez}
\mu\int_{0}^{T}z(t)f_{1}(t)dt=B.
\end{equation}
Since $0\le f_{1}(t)\le 1$, we conclude from (\ref{ez}) that
\begin{equation}\label{nez}
\int_{0}^{T}z(t)dt\ge\frac{B}{\mu}.
\end{equation}

From the third equation in (\ref{ham1}) and from the boundary condition
$\varkappa_{1}(T)=0$ we get the equality
\[
\varkappa_{1}(0)=\mu\int_{0}^{T}z(t)dt-\lambda\int_{0}^{T}y(t)dt+(\lambda-\mu)T.
\]
Next, we use (\ref{ey}) and (\ref{nez}) and derive the bound
\begin{equation}\label{nek}
\varkappa_{1}(0)\ge B-\lambda(r_{2}+1)T +(\lambda-\mu)T.
\end{equation}
Since $\varkappa_{1}(0)=\ln y(0)$ and, hence, $\varkappa_{1}(0)\le
\ln(r_{2}+1)$, we conclude from (\ref{nek}) that
\begin{equation}\label{ner}
\ln(r_{2}+1)\ge B-\lambda(r_{2}+1)T +(\lambda-\mu)T.
\end{equation}

In order to derive a lower bound for $r_{2}$, we denote $r=\lambda
T(r_{2}+1)$ and $b=B+(\lambda-\mu)T+\ln(\lambda T)$ and reduce
(\ref{ner}) to
\begin{equation}\label{nerr}
\ln(r)+r\ge b.
\end{equation}
We assume that $b>1$ (this is true for $B$ large enough). Let us consider two
cases:
 $ r>b$ and $r\leq b$.
One verifies  that (\ref{nerr}) implies
\begin{equation}\label{nerr2}
r\ge b-\ln(b).
\end{equation}
This inequality in the first case follows from $\ln b>0$, and in the second case it follows from $\ln r\leq\ln b$.
Hence, we have
\begin{equation}\label{ner2}
r_{2}\ge cB-d,
\end{equation}
for some $c,d>0$ and all sufficiently large  $B>0$.

A lower bound of the same form holds for $-r_{1}$ since
$r_{1}+r_{2}=1-\mu/\lambda$. Now, let us turn to (\ref{ephi}). Fix
some $\alpha>0$, $\alpha<T$ and note that
\begin{equation}\label{nephi}
\phi(t)\le e^{\alpha\lambda(r_{2}-r_{1})}
\end{equation}
for all $t\in[0,T-\alpha]$. It follows that both products
$r_{1}(r_{2}-1)\phi(t)$ and $(r_{2}-1)\phi(t)$ vanish uniformly for all
$t\in[0,T-\alpha]$ as $B\to\infty$ (the exponent dominates polynomials).
Therefore, from (\ref{ey}) we get
\begin{equation}\label{limy}
\lim_{B\to+\infty}\max_{0\le t\le T-\alpha}|y(t)-r_{2}|=0.
\end{equation}

Now, we find the asymptotics of $z(t)$ for $t\in[0,T-\alpha]$ as
$B\to+\infty$. To this end we express $e^{-\varkappa_{2}}$ via $r_{2}$ from
the equality (\ref{er2}):
\begin{equation}\label{ek2}
(2\lambda r_{2}-\lambda+\mu)^{2}=(\lambda-\mu)^{2}+4\mu\lambda e^{-\varkappa_{2}}.
\end{equation}
Namely, we get
\begin{equation}\label{ek3}
(2\lambda r_{2}+(\mu-\lambda))^{2}=(\mu-\lambda)^{2}+4\mu\lambda e^{-\varkappa_{2}}
\end{equation}
and then
\begin{equation*}
4\lambda^{2} r_{2}^{2}+4\lambda r_{2}(\mu-\lambda)+(\mu-\lambda)^{2}=(\lambda-\mu)^{2}+4\mu\lambda e^{-\varkappa_{2}},
\end{equation*}
that is,
\begin{equation}\label{ek4}
\lambda r_{2}^{2}+r_{2}(\mu-\lambda)=\mu e^{-\varkappa_{2}}
\end{equation}

Since $z(t)=\frac{e^{-\varkappa_{2}}}{y(t)}$, we conclude that
\begin{equation}\label{limz}
\lim_{B\to+\infty}\frac{z(t)}{r_{2}}=\frac{\lambda}{\mu}
\end{equation}
uniformly for all $t\in[0,T-\alpha]$.

\subsection{The turnpike theorem}

Let us turn to the first equation in (\ref{ham1}) and write it in the form
\begin{equation}\label{e1m}
\dot{f}_{1}(t)=\lambda y(t)(1-f_{1}(t))-\mu z(t)f_{1}(t),
\end{equation}
where, as follows from (\ref{limz}) and (\ref{limy}),
\begin{equation}\label{lims}
\lim_{B\to+\infty}\frac{\lambda y(t)}{\lambda r_{2}}=1\quad\text{and}\quad \lim_{B\to+\infty}\frac{\mu z(t)}{\lambda r_{2}}=1
\end{equation}
uniformly for all $t\in[0,T-\alpha]$.

We are ready to state and prove the main result of this section. There is a
striking resemblance with a series of turnpike theorems in mathematical
economics, see \cite{Mcz}.
\begin{theorem}
For any $\alpha>0$ and any $\e>0$, there exists $B_{0}>0$ such that
inequality
\[
|f_{1}(t)-\frac{1}{2}|<\e
\]
holds for all $t\in[\alpha,T-\alpha]$ whenever $B\ge B_{0}$.
\end{theorem}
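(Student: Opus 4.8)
The plan is to read off the dynamics from (\ref{e1m}) and exploit the uniform asymptotics (\ref{limy})--(\ref{limz}), i.e. (\ref{lims}): on $[0,T-\alpha]$ the two coefficients $\la y(t)$ and $\mu z(t)$ are, for $B$ large, both within a small relative error of the \emph{same} value $\la r_2$. Writing $u(t)=y(t)/r_2$ and $v(t)=\mu z(t)/(\la r_2)$, equation (\ref{e1m}) becomes
\[
\dot f_1(t)=\la r_2\bigl(u(t)-(u(t)+v(t))f_1(t)\bigr),
\]
and by (\ref{lims}) there is $B_1$ such that $|u(t)-1|<\delta$ and $|v(t)-1|<\delta$ for all $t\in[0,T-\alpha]$ whenever $B\ge B_1$, where $\delta\in(0,1/2)$ is a small parameter to be fixed at the very end. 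The decisive extra input is that $r_2\to\infty$ as $B\to\infty$ by (\ref{ner2}), so the relaxation rate $\la r_2$ diverges.

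Next I would sandwich $f_1$ between the solutions of two autonomous linear equations. Since $0\le f_1(t)\le 1$ and $1-\delta<u(t),v(t)<1+\delta$, a one-line check (using only $f_1\ge 0$ to discard the wrong-sign cross term) gives the pointwise inequalities
\[
(1-\delta)-(2+2\delta)f_1(t)\ \le\ u(t)-(u(t)+v(t))f_1(t)\ \le\ (1+\delta)-(2-2\delta)f_1(t) .
\]
Hence, by the scalar comparison (Gronwall) principle, $f_1(t)$ lies between $\underline g(t)$ and $\bar g(t)$ on $[0,T-\alpha]$, where $\bar g$ solves $\dot{\bar g}=\la r_2[(1+\delta)-(2-2\delta)\bar g]$, $\underline g$ solves $\dot{\underline g}=\la r_2[(1-\delta)-(2+2\delta)\underline g]$, and $\bar g(0)=\underline g(0)=d$. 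These linear equations are solved explicitly: their equilibria are $\tfrac12+\tfrac{\delta}{1-\delta}$ and $\tfrac12-\tfrac{\delta}{1+\delta}$, both within $\tfrac{\delta}{1-\delta}$ of $\tfrac12$, and their relaxation rates are at least $\la r_2$. Since $d\in[0,1]$ and both equilibria lie in $[0,1]$ for $\delta$ small, this yields the uniform bound
\[
\Bigl|f_1(t)-\tfrac12\Bigr|\ \le\ \frac{\delta}{1-\delta}+e^{-\la r_2 t},\qquad t\in[0,T-\alpha],\ B\ge B_1 .
\]

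To conclude, given $\e>0$ I would first choose $\delta\in(0,1/2)$ so small that $\tfrac{\delta}{1-\delta}<\e/2$, and then choose $B_0\ge B_1$ so large that $e^{-\la r_2\alpha}<\e/2$ for all $B\ge B_0$ — the latter is possible precisely because $r_2\ge cB-d\to\infty$ by (\ref{ner2}). Then for every $t\in[\alpha,T-\alpha]$ and $B\ge B_0$ one has $e^{-\la r_2 t}\le e^{-\la r_2\alpha}<\e/2$, hence $|f_1(t)-\tfrac12|<\e$, which is the stated inequality. (Here one tacitly assumes $\alpha<T/2$; otherwise $[\alpha,T-\alpha]$ is empty and there is nothing to prove.)

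The step I expect to need the most care is the comparison step: one must verify that the two affine bounds on the right-hand side hold along the \emph{actual} trajectory, which rests entirely on the a priori constraint $0\le f_1\le 1$, and one must be sure that the asymptotics (\ref{limy})--(\ref{limz}), uniform only on $[0,T-\alpha]$, are invoked only there. The passage from $[0,T-\alpha]$ to $[\alpha,T-\alpha]$ is then just the remark that the transient $e^{-\la r_2 t}$ has already died out by time $\alpha$ because the rate $\la r_2$ blows up with $B$; everything else is routine ODE bookkeeping.
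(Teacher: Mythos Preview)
Your proposal is correct and follows essentially the same route as the paper: both arguments rest on the uniform asymptotics \eqref{lims} on $[0,T-\alpha]$, the a~priori constraint $0\le f_1\le 1$, and the divergence $r_2\to\infty$ from \eqref{ner2}, and both reduce the claim to an elementary linear ODE estimate. The only cosmetic difference is packaging: the paper performs the time change $\tau=\lambda r_2 t$ and reads the conclusion from the explicit variation-of-constants formula for $\dot x=-2x+\e_0(\tau)$, whereas you sandwich $f_1$ between two autonomous linear solutions via scalar comparison; your version has the minor advantage of spelling out explicitly why the transient has died by time $\alpha$ (namely $e^{-\lambda r_2\alpha}\to 0$), a point the paper leaves implicit.
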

\begin{proof}
We make a linear time change $\tau=\lambda r_{2}t$ and derive the
required assertion from the explicit solutions of resulting linear
differential equations.

Rewrite (\ref{e1m}) on $[0,T-\alpha]$ as
\begin{equation}\label{e2m}
\dot{f}_{1}(t)=\lambda r_{2}(1+\e_{1}(t))(1-f_{1}(t))-\lambda r_{2}(1+\e_{2}(t))f_{1}(t),
\end{equation}
where $|\e_{i}(t)|\le\e$ for $i=1,2$, and for all
$t\in[0,T-\alpha]$. Because of (\ref{lims}), this is true for all
$B$ large enough. After the time change, we have
\begin{equation}\label{e3m}
\frac{d}{d\tau}{f}_{1}(\tau)=(1+\e_{1}(\frac{\tau}{\lambda r_{2}}))(1-f_{1}(\tau))-(1+\e_{2}(\frac{\tau}{\lambda r_{2}}))f_{1}(\tau),
\end{equation}
or, equivalently,
\begin{equation}\label{e4m}
\frac{d}{d\tau}{f}_{1}(\tau)=1-2f_{1}(\tau)+\e_{0}(\tau),
\end{equation}
where $|\e_{0}(\tau)|\le\e$ for all $\tau\in[0,\lambda
r_{2}(T-\alpha)]$. It remains to note that $0\le f_{1}(0)\le 1$ for
all $B>0$ and, hence, $f_{1}(\tau)$ converges to $1/2$
exponentially and uniformly in $B>0$. Indeed, this follows from the
explicit solution
\[
x(\tau)=e^{-2\tau}x(0)+\int_{0}^{\tau}\e_{0}(s)e^{2(s-\tau)}ds,
\]
where we denoted $x=f_{1}-1/2$.
\end{proof}

\section{Microscopic approach. Quantum Hamiltonian Formalism.}\label{GenFun}

In this section we study the model described in the previous sections by the
so-called quantum Hamiltonian formalism \cite{Lloy96,Schu01}. This is
somewhat a misnomer as no quantum physics is used. The name comes from
certain formal analogies with linear operators used in quantum mechanics. In
essence, the fundamental idea is to write the generator as a matrix and then
to use linear and multi-linear algebra for explicit computations of
expectations.

\subsection{Notation and Definitions}

The first step is to consider not only the number of the excited atoms at
moment $t$ and the cumulative number of emissions during $[0,t]$, but the
state $\eta_i(t)$ of each atom $i$, $1\leq i \leq N$, at moment $t$. These
atoms form a particle system. A microstate, i.e., a microscopic configuration
of the particle
system, is denoted by $\eta = \{\eta_1,\dots,\eta_N\}$ where $\eta_i \in
\{0,1\}$ denotes the state of atom $i$. The value 0 corresponds to the ground
state, while 1 corresponds to the excited state. The state space of the
particle system is therefore $\S_N=\{0,1\}^N$. The number $\Upsilon_N(t)$ of
emissions up to time $t$ is a non-negative integer so that the state space $S_N$
of the process $(\eta(t),\Upsilon_N(t))$ is $S_N = \S_N \times \ints_0$. The
number of excited atoms at time $t$ is given by $\Xi_N(t) = \sum_{i=1}^N
\eta_i(t)$.

For a given configuration $\eta$ we shall use the shorthand $\eta^k$ for the
switched configuration with state variables \be \eta_i^k = \eta_i +
(1-2\eta_k) \delta_{k,i} \ee which corresponds to a switch of the state of
atom $k$ in the configuration $\eta$. The generator for the process then
reads \bea \label{Markovgenerator} \mathcal{L}_N f(\eta,\Upsilon_N) & = &
\sum_{k=1}^N \left[ \lambda (1-\eta_k) \left( f(\eta^k,\Upsilon_N) -
f(\eta,\Upsilon_N) \right) + \mu \eta_k\left( f(\eta^k,\Upsilon_N+1) -
f(\eta,\Upsilon_N) \right) \right]. \eea It is easy to see that one recovers
\eref{3o1} for functions $f(\Xi_N,\Upsilon_N)$ with $\Xi_N = \sum_{i=1}^N
\eta_i$.

In order to apply the quantum Hamiltonian formalism we map the states of a
single atom to the basis vectors $\mathfrak{e}_0 := (1,0)^T$ (corresponding
to the ground state) and $\mathfrak{e}_1 := (0,1)^T$ (corresponding to the
excited state), forming the canonical basis of $\C^2$. The superscript $T$
denotes transposition, i.e., these vectors are considered to be column
vectors. For the $N$-particle system a state $\eta$ is then mapped to the
tensor basis $\ket{\eta} = \mathfrak{e}_{\eta_1} \otimes
\mathfrak{e}_{\eta_2} \otimes \dots \otimes  \mathfrak{e}_{\eta_N}$ of
$(\C^2)^{\otimes N}$. Analogously we define row vectors
$\bra{\eta}=\ket{\eta}^T$ with orthogonality relation $\inprod{\eta}{\eta'} =
\delta_{\eta,\eta'}$ where $\inprod{\cdot}{\cdot}$ denotes the usual scalar
product. A Bernoulli product measure on $\S_N$ with density $\rho_0$ is thus
given by the vector \bel{Bernoulli} \ket{x} = ((1-\rho_0,\rho_0)^T){^{\otimes
N}} =  (1-\rho_0)^N ((1,x)^T)^{\otimes N} \ee where $x=\rho_0/(1-\rho_0)$ is
the fugacity. Normalization is reflected by the property $\inprod{s}{x} =1$
where $\bra{s} = \sum_{\eta\in\S_N} \bra{\eta} = (1,1,\dots,1)$ is the
so-called summation vector.

In order to consider also the number $\Upsilon_N$ of radiation events we
define infinite-dimensional vectors $\ket{\Upsilon_N}$ with components
$\ket{\Upsilon_N}_n =\delta_{n,\Upsilon_N}$ for $n\geq 0$. Then a state
$(\eta,\Upsilon_N)$ is mapped to the tensor product $\dket{\eta,\Upsilon_N}
:= \ket{\eta}\otimes\ket{\Upsilon_N}$ which is a vector in ${\mathbb V} :=
(\C^2)^{\otimes N} \otimes \C^{\ints_0}$ denoted by the double-ket symbol
$\dket{\cdot}$. In complete analogy to above we define also
$\dbra{\eta,\Upsilon_N} = \dket{\eta,\Upsilon_N}^T$ with the orthogonality
relation $\dinprod{\eta,\Upsilon_N}{\eta,\Upsilon_N'} =  \delta_{\eta,\eta'}
\delta_{\Upsilon_N,\Upsilon_N'}$. A probability measure $\mu$ on
$S_N=\{0,1\}^{N} \times \ints_0$ is thus represented by a vector \bel{measure}
\dket{\mu} = \sum_{(\eta,\Upsilon_N)\in S_N} \mu(\eta,\Upsilon_N)
\dket{\eta,\Upsilon_N}. \ee Normalization is reflected by the property
$\dinprod{s}{\mu} =1$ where $\dbra{s} = \sum_{\eta,\Upsilon_N}
\dbra{\eta,\Upsilon_N} =\bra{s}\otimes \bra{s'}$ with $\bra{s'} =
\sum_{\Upsilon_N\in\ints_0} \bra{\Upsilon_N}$.

Following the quantum Hamiltonian formalism we first introduce the
two-dimensional unit matrix  ${\mathbf 1}$ and the Pauli matrices
$\sigma^{x,y,z}$. In order to build operators for the $N$-particle system we
define the unit operator $\tilde{\mathbf 1} := {\mathbf 1}^{\otimes N}$ on
$(\C^2)^{\otimes N}$ and the unit operator  $\mathds{1}$ on $\C^{\ints_0}$.
In order to describe the evolution of the number of excited atoms we next
introduce the diagonal number operator $\hat{N} = \sum_{i=1}^N \hat{n}_i$
which is composed of the the sum of local operators $\hat{n}_i$ with the
property $\hat{n}_i \ket{\eta} = \eta_i \ket{\eta}$ and therefore $\hat{N}
\ket{\eta} = \Xi_N \ket{\eta}$. The subscript $i$ at a matrix $a_i$ denotes
the $i^{th}$ position in the tensor product $a_i = {\mathbf 1} \otimes \dots
\otimes {\mathbf 1} \otimes a \otimes {\mathbf 1} \dots \otimes {\mathbf 1}$.
In terms of Pauli matrices the single-atom number operator takes the form
$\hat{n} = 1/2({\mathbf 1}- \sigma^z)$. We also need the operator $\hat{V} =
\sum_{i=1}^N ({\mathbf 1}- \hat{n}_i)$ and the (non-diagonal) flipping
operators $\hat{S}^\pm = \sum_{i=1}^N \sigma_i^\pm$, where $\sigma^\pm =
1/2(\sigma^x \pm \rmi \sigma^y)$ with imaginary unit $\rmi$. The matrix
$\sigma_i^+$ turns excited atom $i$ into its ground state, while $\sigma_i^-$
corresponds to excitation of atom $i$.

For the radiation process we define the diagonal counting operator $\hat{K}$
where $\hat{K} \ket{\Upsilon_N} = \Upsilon_N \ket{\Upsilon_N}$. We also
introduce the matrix $\hat{K}^+$ where $\hat{K}^+$ acts  as raising operator
on  $\C^{\ints_0}$, i.e., $\hat{K}^+\ket{\Upsilon_N} = \ket{\Upsilon_N+1}$
with $\Upsilon_N \in \ints_0$. We note the commutation relation \bel{comm}
[\hat{K},\hat{K}^+]=\hat{K}^+. \ee which will play a role in the computation
of the radiation activity.

Finally, we lift the action of these operators to ${\mathbb V}$ by tensor
multiplication with the respective unit operators: $\hat{\mathbf N} = \hat{N}
\otimes \mathds{1}$, $\hat{\mathbf V} = \hat{V} \otimes \mathds{1}$,
$\hat{\mathbf S}^\pm = \hat{S}^\pm \otimes \mathds{1}$, $\hat{\mathbf K} =
\tilde{\mathbf 1} \otimes \hat{N}$, $\hat{\mathbf K}^+ = \tilde{\mathbf 1}
\otimes \hat{K}^+$. The generator of the process can now be written in matrix
form as \be {\mathbf H} = - \mu (\hat{\mathbf S}^+\hat{\mathbf
K}^+-\hat{\mathbf N}) - \lambda(\hat{\mathbf S}^- - \hat{\mathbf K}) =
\sum_{i=1}^N {\mathbf A}_i. \ee with \be {\mathbf A}_i = \mu (\hat{n}_i
\otimes \mathds{1} - \sigma_i^+\otimes \hat{K}^+) + \lambda(\tilde{\mathbf 1}
- \hat{n}_i - \sigma_i^- ) \otimes \mathds{1}. \ee For any initial measure
$\dket{\mu}$ the measure at time $t\geq 0$ is given by \be \dket{\mu(t)} =
\rme^{-{\mathbf H}t} \dket{\mu}. \ee Notice that the ${\mathbf A}_i$ commute
among themselves and therefore $\rme^{-{\mathbf H}t} = \prod_{i=1}^N
\rme^{-{\mathbf A}_it}$.

Consider now the initial measure $\dket{x}= \ket{x} \otimes \ket{0}$ with the
Bernoulli product measure $\ket{x}$ for the state of the atoms defined in
\eref{Bernoulli} and define
\bel{genfun}
F_T(x,y,z) := \E_x \left(y^{\Xi_N(T)} z^{\Upsilon_N(T)}\right)
\ee
which is the generating function
for the probability to arrive at $\Xi_N=M$ and $\Upsilon_N=K$ at time
$T$, starting at time 0 from $K=0$ (number of radiation events) and Bernoulli
product measure $\ket{x}$ for the state of the atoms.

Moreover define for $0\leq t \leq T$ the two-time expectations
\bea
G_T(x,y,z,t) & := & \E_x \left(\Xi_N(t) y^{\Xi_N(T)} z^{\Upsilon_N(T)}\right)
\\
H_T(x,y,z,t) & := & \E_x \left(\Upsilon_N(t) y^{\Xi_N(T)} z^{\Upsilon_N(T)}\right)
\eea
and the scaled and normalized quantities
\bea
g_T(x,y,z,t) & := & \frac{1}{N} G_T(x,y,z,t) / F_T(x,y,z) \\
h_T(x,y,z,t) & := & \frac{1}{N} H_T(x,y,z,t) / F_T(x,y,z).
\eea
They can be interpreted as conditional expectations of
the scaled variables $\xi_N(t)$ and $\zeta_N(t)$ of the scaled process
\eref{scaleproc}. Thus $g_T(x,y,z,t)$ is the conditional mean number of
excited atoms at time $t$ (normalized by the total number of atoms) and $h_T(x,y,z,t)$
is the conditional number of radiation events up to time $t$ per atom.
In the following we show that these functions are the microscopic analogs of
$f_1(t)$ and $f_2(t)$ of
the previous section. The case when $f_1(T)$ is free corresponds to $y=1$.


\subsection{Computation of the generating functions}

In the quantum Hamiltonian formalism we have by definition $F_T(x,y,z) =
\dbra{s} y^{\hat{\mathbf N}} z^{\hat{\mathbf K}} \rme^{-{\mathbf H}T}
\dket{x}$. Using the fact that $y^{\hat{N}}$ and  $z^{\hat{K}}$ are diagonal
and factorize we get that \bel{FT2} F_T(x,y,z) = \dbra{s} \rme^{-\hat{\mathbf
H}t} \dket{xy} \ee where $\hat{\mathbf H} = y^{\hat{\mathbf N}}
z^{\hat{\mathbf K}} \, {\mathbf H} \, y^{-\hat{\mathbf N}}  z^{-\hat{\mathbf
K}} =: \sum_{i=1}^N \hat{{\mathbf A}}_i$ with $\hat{{\mathbf A}}_i =
y^{\hat{n}_i}\otimes z^{\hat{K}} \, {\mathbf A}_i \, y^{-\hat{n}_i} \otimes
z^{-\hat{K}}$. Next we expand the exponential of $\hat{{\mathbf H}}$ into its
Taylor series and use the fact that $\bra{s} \hat{\mathbf K}^+ = \bra{s}$. It
follows that $\bra{s} \rme^{-\hat{{\mathbf H}}T} = \bra{s} \rme^{-\tilde{H}T}
\otimes \mathds{1} = \bra{s} \rme^{-\tilde{H}T} \otimes \bra{s'}$ with the
weighted generator $\tilde{H} = \sum_{i=1}^N \tilde{A}_i$ (acting on
$(\C^2)^{\otimes N}$) and \be \tilde{A}_i =  \mu (\hat{n}_i - y^{-1}z
\sigma_i^+) + \lambda({\mathbf 1} - \hat{n}_i - y \sigma_i^- ) . \ee

Observe that the action of $\tilde{H} \otimes \mathds{1}$ on the subspace
$\C^{\ints_0}$ is trivial. Hence the inner product in \eref{FT2} in this
subspace can be factorized and is trivially equal to one. Therefore
$F_T(x,y,z) = \bra{s} \rme^{-\tilde{H}T} \ket{xy}$ which is an inner product
only in $(\C^2)^{\otimes N}$. Moreover, due the factorization of both the
weighted generator and the initial measure we arrive at \bel{FT3} F_T(x,y,z)
= (1-\rho_0)^N \left[ (1,1) \rme^{-\tilde{A}T} (1,xy)^T \right]^N \ee with
the $2\times 2$ matrix \be \tilde{A} = \left( \ba{cc} \lambda & -\mu y^{-1}z
\\ - \lambda y & \mu \ea \right). \ee

A general $2\times 2$ matrix \be M = \left( \ba{cc} a &b \\ c & d \ea \right)
\ee has eigenvalues $\lambda_1 = (a+d - \delta)/2$ and $\lambda_2 = (a+d +
\delta)/2$ with $\delta = \sqrt{(a-d)^2 + 4bc}$. For $\delta \neq 0$ its
exponential takes the form \bel{expM} \rme^{-M\tau} =
\rme^{-\frac{(a+d)\tau}{2}} \left( \ba{cc}
\cosh{\frac{\delta}{2}\tau} + \frac{a-d}{\delta} \sinh{\frac{\delta}{2}\tau} & \frac{2 b}{\delta} \sinh{\frac{\delta}{2}\tau} \\
\frac{2 c}{\delta} \sinh{\frac{\delta}{2}\tau} & \cosh{\frac{\delta}{2}\tau}
- \frac{a-d}{\delta} \sinh{\frac{\delta}{2}\tau} \ea \right). \ee

For $\tilde{A}$ we have $\delta = \sqrt{(\mu-\lambda)^2 + 4\mu\lambda z}$ and
for convenience we define \bel{FT4} f_T(x,y,z) :=
\frac{\rme^{\frac{(\lambda+\mu)T}{2}} }{1-\rho_0}
\left(F_T(x,y,z)\right)^{\frac{1}{N}}. \ee This yields \bea f_T(x,y,z) & = &
(1,1) \left( \ba{cc}
\cosh{\frac{\delta}{2}T} + \frac{\mu-\lambda}{\delta} \sinh{\frac{\delta}{2}T} & \frac{2 y^{-1}z\mu}{\delta} \sinh{\frac{\delta}{2}T} \\
\frac{2 y\lambda}{\delta} \sinh{\frac{\delta}{2}T} & \cosh{\frac{\delta}{2}T}
- \frac{\mu-\lambda}{\delta} \sinh{\frac{\delta}{2}T} \ea \right) \left(
\ba{c} 1 \\ xy \ea \right) \nonumber
 \\
\label{FT5} & = &  (1+xy) \cosh{\frac{\delta}{2}T} + \frac{1}{\delta} \left[
2y\lambda + 2xz\mu + (\mu-\lambda)(1-xy)\right] \sinh{\frac{\delta}{2}T} .
\eea

Next we consider $G_T(x,y,z,t)$. According to definition one has \be
G_T(x,y,z,t) = \sum_{i=1}^N \dbra{s} y^{\hat{\mathbf N}} z^{\hat{\mathbf K}}
\rme^{-{\mathbf H}(T-t)} \hat{\mathbf N} \rme^{-{\mathbf H}t} \dket{x}. \ee
In a similar fashion to above one finds \be G_T(x,y,z,t) = N (1-\rho_0)^N
\rme^{\frac{-(\lambda+\mu)(N-1)T}{2}} f^{N-1}_T(x,y,z) \left[ (1,1)
\rme^{-\tilde{A}(T-t)} \hat{n} \rme^{-\tilde{A}t} (1,xy)^T \right], \ee and,
using \eref{expM}, \bea g_T(x,y,z,t) & = & \frac{1}{f_T(x,y,z)}
\left[ y \cosh{\frac{\delta}{2}(T-t)} +  \frac{2z \mu - y(\mu-\lambda)}{\delta} \sinh{\frac{\delta}{2}(T-t)} \right] \nonumber \\
\label{gT} & & \times \left[ x \cosh{\frac{\delta}{2}t} + \frac{2\lambda -
x(\mu-\lambda)}{\delta} \sinh{\frac{\delta}{2}t} \right] \eea with initial
condition \be g_T(x,y,z,0) = \frac{x}{f_T(x,y,z)} \left[ y
\cosh{\frac{\delta}{2}T} +  \frac{2z \mu - y(\mu-\lambda)}{\delta}
\sinh{\frac{\delta}{2}T} \right] \ee parametrized by the fugacity
$x=\rho_0/(1-\rho_0)$.

Next we note that \be H_T(x,y,z,t) = \dbra{s} y^{\hat{\mathbf N}}
z^{\hat{\mathbf K}} \rme^{-{\mathbf H}(T-t)} \hat{\mathbf K} \rme^{-{\mathbf
H}t}  \dket{x}. \ee In order to compute this quantity it is convenient to
consider its derivative w.r.t. $t$. Using $d/dt \, \rme^{{\mathbf H}t}
\hat{\mathbf K} \rme^{-{\mathbf H}t} = \rme^{{\mathbf H}t} [H,\hat{\mathbf
K}] \rme^{-{\mathbf H}t}$ and \eref{comm}, steps analogous to those that gave
rise to \eref{FT5} lead to \bea \dot{H}_T(x,y,z,t) & = & \mu y^{-1}z
\sum_{i=1}^N \dbra{s}  \rme^{-\hat{\mathbf H}(T-t)} \hat{\mathbf S}^+\hat{\mathbf K}^+  \rme^{-\hat{\mathbf H}t} \dket{xy} \nonumber \\
& = & N (1-\rho_0)^N \rme^{\frac{-(\lambda+\mu)(N-1)T}{2}} f^{N-1}_T(x,y,z)
\left[ (1,1) \rme^{-\tilde{A}(T-t)} \sigma^+ \rme^{-\tilde{A}t} (1,xy)^T
\right]. \eea Using \eref{expM} this yields \bea \dot{h}_T(x,y,z,t) & = &
\frac{z \mu }{f_T(x,y,z)}
\left[\cosh{\frac{\delta}{2}(T-t)} +  \frac{2y \lambda + \mu-\lambda}{\delta} \sinh{\frac{\delta}{2}(T-t)} \right] \nonumber \\
\label{hdotT} & & \times \left[ x \cosh{\frac{\delta}{2}t} +  \frac{2\lambda
- x(\mu-\lambda)}{\delta} \sinh{\frac{\delta}{2}t} \right]. \eea Integration
of this quantity w.r.t. $t$ is straightforward and yields $h(t)$ with initial
condition $h(0)=0$. The final values $g_T(x,y,z,T) \in [0,1]$ and
$h_T(x,y,z,T) \geq 0$ are parametrized by the variables $y,z$.

\begin{remark}
In the limit $T\to\infty$ straightforward computation for intermediate times
$t \propto cT$ shows that
\bea g^\ast(z) & := & \lim_{T\to\infty} g_T(x,y,z,cT) =
\frac{1}{2} \left(1 + \frac{\lambda-\mu}{\delta}\right) \\
\dot{h}^\ast(z)
& := & \lim_{T\to\infty} \dot{h}_T(x,y,z,cT) = \frac{z \lambda\mu}{\delta}
\eea
are constants that depend neither on the initial fugacity $x$ nor on the
atypical density of excited atoms parametrized by $y$. Moreover, for any choice of
$\lambda$ and $\mu$ the radiation activity $\dot{h}^\ast(z)$ is monotone in
$z$, i.e., one has $\dot{h}^\ast(z)\neq \dot{h}^\ast(1)$ for $z\neq 1$.
If $\lambda=\mu$ where radiation and excitation are in balance one finds
somewhat surprisingly that, despite the conditioning on a strongly atypical number
of excited atoms at time $T$ and a strongly atypical number of radiation events
up to time $T$, the mean number of excited atoms at intermediate times $t = O(T)$
is 1/2, which for $\lambda=\mu$ is the typical value without conditioning.
\end{remark}

\subsection{The case $\lambda=\mu=1$}

The expressions derived above simplify for $\lambda=\mu=1$ corresponding to
$\gamma=1$ in \eref{notation}. We present this case in more detail. One gets
$\delta = 2\sqrt{z}$ and it is convenient to introduce $\tilde{y} :=
y/\sqrt{z}$ and $\tilde{x} := x\sqrt{z}$. With this the expression obtained
above for the general case reduce to \bea \label{fgh1}
f_T(x,y,z) & = & \frac{1}{2} \left[ (1+\tilde{y})(1+\tilde{x}) \rme^{\sqrt{z}T} +(1-\tilde{y})(1-\tilde{x} ) \rme^{-\sqrt{z}T} \right] \\
\label{fgh2} g_T(x,y,z,t) & = & \frac{1}{2} \left( 1 - \frac{
(1+\tilde{y})(1-\tilde{x}) \rme^{\sqrt{z}(T-2t)} + (1-\tilde{y})(1+\tilde{x})
\rme^{-\sqrt{z}(T-2t)} }
{(1+\tilde{y})(1+\tilde{x}) \rme^{\sqrt{z}T} +(1-\tilde{y})(1-\tilde{x} ) \rme^{-\sqrt{z}T}} \right) \\
\label{fgh3} \dot{h}_T(x,y,z,t) & = & \frac{\sqrt{z}}{2} \left( h^\ast -
\frac{ (1+\tilde{y})(1-\tilde{x}) \rme^{\sqrt{z}(T-2t)} -
(1-\tilde{y})(1+\tilde{x}) \rme^{-\sqrt{z}(T-2t)} }
{(1+\tilde{y})(1+\tilde{x}) \rme^{\sqrt{z}T} +(1-\tilde{y})(1-\tilde{x} )
\rme^{-\sqrt{z}T}} \right) \eea with \be h^\ast =
\frac{(1+\tilde{y})(1+\tilde{x}) \rme^{\sqrt{z}T} - (1-\tilde{y})(1-\tilde{x}
) \rme^{-\sqrt{z}T}}{(1+\tilde{y})(1+\tilde{x}) \rme^{\sqrt{z}T}
+(1-\tilde{y})(1-\tilde{x} ) \rme^{-\sqrt{z}T}}. \ee

For large time frame $T$ one gets with the definition $\tau = T-t$ the
following asymptotic behaviour for $g(x,y,z,t) := \lim_{T\to\infty}
g_T(x,y,z,t) $:
\be
g(x,y,z,t) = \left\{ \ba{ll}
\frac{1}{2}\left(1- \frac{1-\tilde{x}}{1+\tilde{x}} \rme^{-2\sqrt{z}t} \right)& \mbox{ for } t = o(T) \\
\frac{1}{2} & \mbox{ for } t = O(T) \\
\frac{1}{2}\left(1- \frac{1-\tilde{y}}{1+\tilde{y}} \rme^{-2\sqrt{z}\tau}
\right)& \mbox{ for } \tau = o(T) \ea \right.
\ee

The corresponding
asymptotics for $h(x,y,z,t) := \lim_{T\to\infty} h_T(x,y,z,t) $ are: \be
h(x,y,z,t) = \left\{ \ba{ll}
\frac{\sqrt{z}}{2} t- \frac{1}{4} \frac{1-\tilde{x}}{1+\tilde{x}} \rme^{-2\sqrt{z}t} & t = o(T) \\
\frac{\sqrt{z}}{2} t & t = O(T) \\
\frac{\sqrt{z}}{2} t- \frac{1}{4} \frac{1-\tilde{y}}{1+\tilde{y}}
\rme^{-2\sqrt{z}\tau}  & \tau = o(T) \ea \right. \ee Thus, for $T$ large, the
radiation intensity $\dot{h}_T(x,y,z,t)$ is approximately constant except
initially and towards the end of the observation period $T$.

In order to make contact with the macroscopic results of the previous
sections we take $T$ fixed, choose $y=1$, and impose the initial condition
$g_T(x,y,z,t) =1/2$. This requires choosing the initial fugacity $x$ such that
$(1+\tilde{y})(1-\tilde{x}) \rme^{\sqrt{z}T} + (1-\tilde{y})(1+\tilde{x})
\rme^{-\sqrt{z}T} = 0$, i.e., for $\tilde{y} = 1/\sqrt{z}$ one chooses
\be \frac{1-\tilde{x}}{1+\tilde{x}} =
\frac{1-\sqrt{z}}{1+\sqrt{z}}  \rme^{-2\sqrt{z}T} =: C
\ee
Thus one obtains for $t\in[0,T]$
\bea
g_T(x,1,z,t) & = & \frac{1}{2} + \frac{C}{1-C^2}  \sinh{(2\sqrt{z}t)} \\
h_T(x,1,z,t) & = & \frac{\sqrt{z}}{2} \left(1-\frac{2}{1-C^2}\right) t +
\frac{C}{2(1-C^2)}  \sinh{(2\sqrt{z}t)}.
\eea
With the choice $z=\rme^{-\varkappa_2}$ (see \reff{notation}) one finds $f_1(t)=g_T(x,1,z,t)$
and $f_2(t)=h_T(x,1,z,t)$ (see \eref{f1} and \eref{f2} together
with \eref{K}, \reff{I} and \reff{kap22} for $\lambda=\mu=1$), thus relating
the macroscopic constant
$\varkappa_2$ appearing in the Hamiltonian \eref{4o1}
to the generating function parameter $z$ introduced in \eref{genfun}.

We point out two other special cases.
(a) For initial fugacity $x=1/\sqrt{z}$ and arbitrary $y$ one
has \bea
f_T(1/\sqrt{z},y,z,t) & = & (1+\tilde{y}) \rme^{\sqrt{z}T} \\
g_T(1/\sqrt{z},y,z,t) & = & \frac{1}{2}  \left( 1 - \frac{ 1-\tilde{y}}
{1+\tilde{y}} \rme^{-2\sqrt{z}(T-t)} \right) \\
h_T(1/\sqrt{z},y,z,t) & = & \frac{\sqrt{z}}{2} t + \frac{1}{4}  \frac{
1-\tilde{y}}{1+\tilde{y}} \left( \rme^{-2\sqrt{z}(T-t)} -
\rme^{-2\sqrt{z}T}\right). \eea Hence for initial times such that $T-t \gg
1/\sqrt{z}$ both the number of excited atoms and the radiation activity are
essentially constant. (b) On the other hand, for $y=\sqrt{z}$ and arbitrary initial
fugacity $x$ the expressions
\eref{fgh1} - \eref{fgh3} reduce to \bea
f_T(x,\sqrt{z},z,t) & = & (1+\tilde{x}) \rme^{\sqrt{z}T} \\
g_T(x,\sqrt{z},z,t) & = & \frac{1}{2}  \left( 1 - \frac{ 1-\tilde{x}}
{1+\tilde{x}} \rme^{-2\sqrt{z}t} \right) \\
h_T(x,\sqrt{z},z,t) & = & \frac{\sqrt{z}}{2} t + \frac{1}{4}  \frac{
1-\tilde{x}}{1+\tilde{x}}  \left(\rme^{-2\sqrt{z}t} - 1\right). \eea In this
case the number of excited atoms and the radiation activity approach a
constant  for times $t\gg 1/\sqrt{z}$.

\begin{remark} In all cases the radiation intensity is not constant, but has deviations from a constant in time windows
of length $1/(2\sqrt{z})$, i.e., the intensity approaches a constant (or
deviates from it) exponentially with a decay rate $1/(2\sqrt{z})$. We refer
to this behaviour as the occurence of ``moderate bursts''.
\end{remark}

\begin{remark} The quantum Hamiltonian approach makes clear that the behaviour of a two-level system is generic.
In a $m$-level system the functions $g$ and $h$ (and their analogs for
similar observables) will always be of the factorized form $g_T(\dots,t) =
a(\dots,T-t) b(\dots,t)$ and the functions $a$ and $b$ will be sums of
exponentials of the eigenvalues of the single-atom generator $A$.
\end{remark}

\section*{Acknowledgements}This work was supported by  FAPESP (S\~ao Paulo Research Foundation) grant 2015/03452-5.

A. Yambartsev   thanks CNPq (Conselho Nacional de Desenvolvimento
Cient\'ifico e Tecnol\^ogico) grant 307110/2013-3.

G.M. Sch\"utz thanks USP for Kind Hospitality

The researches of E. Pechersky and A. Vladimirov were carried out
at the Institute for Information Transmission Problems (IITP), Russian
Academy of Science and funded by the Russian Foundation for Sciences (grant
14-50-00150).

S. Pirogov studies have been funded by the Russian Science Foundation (project N 17-11- 01098).


\begin{thebibliography}{99.}

\bibitem{Beli13a} Belitsky, V., Sch\"utz, G.M.: Microscopic structure of
    shocks and antishocks in the ASEP conditioned on low current. {\em J. Stat.
    Phys.} \textbf{152}, 93--111 (2013)

\bibitem{Bil} Billingsley, P.: Convergence of probability measures. John
    Willey $\&$ Sons, Inc.,  274 pp (1999)

\bibitem{Bodi05} Bodineau, T., Derrida, B.: Distribution of current in
    non-equilibrium diffusive systems and phase transitions. Phys. Rev. E
    \textbf{72}, 066110 (2005)

\bibitem{Bodi06} Bodineau, T., and Derrida, B.: Current Large Deviations for
    Asymmetric Exclusion Processes with Open Boundaries. {\em J. Stat. Phys.}
    \textbf{123}(2), 277--300 (2006)

\bibitem{Chet11} Chetrite, R. and Gupta, S.: Two refreshing views of
    fluctuation theorems through kinematics elements and exponential
    martingale, {\em J. Stat. Phys.} {\bf 143}(3),  543--584 (2011)

\bibitem{Chvo10} Chvosta, P., Einax, M., Holubec, V.,  Ryabov, A. and
    Philipp Maass: Energetics and performance of a microscopic heat engine
    based on exact calculations of work and heat distributions, {\em J. Stat.
    Mech.}  P03002 (2010)

\bibitem{Enter}A.C.D. van Enter,  R.~Fernandez,  F. den Hollander,  and
    F.~Redig: A large-deviation view on dynamical Gibbs-non-Gibbs
    transitions, {\em Moscow Math. Journal} \textbf{10}(4), 687--711
    (2010)

\bibitem{FeK} Feng, J. and  Kurtz, T.G.: Large Deviations for Stochastic
    Processes, AMS,  410 pp (2006)

\bibitem{Gall95b} Gallavotti, G. and Cohen, E~G~D.: Dynamical ensembles in
    stationary states, {\em J. Stat. Phys.} {\bf 80}(5--6),~931--970 ( 1995)

\bibitem{Harr06} Harris, R~J., R{\'a}kos, A. and Sch{\"u}tz, G~M.: Breakdown
    of Gallavotti-Cohen symmetry for stochastic dynamics, {\em Europhys.
    Lett.} {\bf 75}(2),~227--233 (2006)

\bibitem{Harr07} Harris, R.J., Sch\"utz, G.M.: Fluctuation theorems for
    stochastic dynamics. {\em J. Stat. Mech.}  P07020 (2007)

\bibitem{Harr13}  Harris, R.J., Popkov, V.  and  Sch\"utz, G.M.: Dynamics
    of Instantaneous Condensation in the ZRP Conditioned on an Atypical
    Current, {\em Entropy} {\bf 15}, 5065--5083 (2013)

\bibitem{Jarz97} Jarzynski, C.: Nonequilibrium equality for free energy
    differences, {\em Phys. Rev. Lett.} {\bf 78}(14),~2690--2693 (1997)

\bibitem{Kraa} Kraaij, R.: Large Deviations for Finite State Markov Jump
    Processes with Mean-Field Interaction Via the Comparison Principle for an
    Associated Hamilton?Jacobi Equation, {\em J. Stat. Phys.}  {\bf 164},
    321?345 (2016)


\bibitem{Lebo99} Lebowitz, J~L. and Spohn, H.: A Gallavotti-Cohen-type
    symmetry in the large deviation functional for stochastic dynamics,
 {\em J. Stat. Phys.} {\bf 95}(1--2),~333--365 (1999)

\bibitem{Liu13} Fei Liu: Equivalence of two Bochkov-Kuzovlev equalities in
    quantum two-level systems,
 arXiv:1312.6570 (2013)

\bibitem{Lloy96} Lloyd, P., Sudbury, A., Donnelly, P.: Quantum operators in
    classical probability theory: I. "Quantum spin" techniques and the
    exclusion model of diffusion. {\em Stoch. Processes Appl.} \textbf{61}(2),
    205--221 (1996)

\bibitem{Mcz} McKenzie, L.: Turnpike Theory, {\em Econometrica} {\bf
    44}(5), 841--865 (1976)

\bibitem{reza}  Rezakhanlou, F., Lectures on the Large Deviation Principle,
    84~pp, https://math.berkeley.edu/~rezakhan/LD.pdf

\bibitem{Schuler05} Schuler, S., Speck, T., Tietz, C., Wrachtrup, J. and
    Seifert, U.: Experimental test of the fluctuation theorem for a driven
    two-level system with time-dependent rates, {\em Phys. Rev. Lett.} {\bf
    94}(18),~180602 (2005)

\bibitem{Schu01} Sch\"utz, G. M.: Exactly solvable models for many-body
    systems far from equilibrium. In: Domb, C., Lebowitz, J. (eds.) {\em Phase
    Transitions and Critical Phenomena}, Vol. 19, pp. 1-251. Academic Press,
    London (2001)


\bibitem{Schu14} G. M.~Sch\"utz, R. J. Harris, A. R\'akos: Fluctuation
    theorems for stochastic interacting particle systems {\em Markov Proc.
    Rel. Fields} {\bf 20}(1),~ 3-44 (2014)

\bibitem{Tietz06b} Tietz, C., Schuler, S., Speck, T., Seifert, U. and
    Wrachtrup, J.: Measurement of stochastic entropy production, {\em Phys.
    Rev. Lett.} {\bf 97}(5),~050602 (2006)

\end{thebibliography}
\end{document}